\documentclass[conference]{IEEEtran}
\IEEEoverridecommandlockouts

\usepackage{graphicx, subfigure}
\usepackage[top=1in,bottom=1in,left=1in,right=1in]{geometry}
\usepackage[sort&compress,numbers]{natbib} \setlength{\bibsep}{0.0pt}
\usepackage{amsfonts, amsmath, amssymb, amsthm, constants}
\usepackage{algorithm,algorithmic}


\usepackage{color}
\definecolor{darkred}{RGB}{100,0,0}
\definecolor{darkgreen}{RGB}{0,100,0}
\definecolor{darkblue}{RGB}{0,0,150}

\usepackage{hyperref}
\hypersetup{colorlinks=true, linkcolor=darkred, citecolor=darkgreen, urlcolor=darkblue}
\usepackage{url}

\newtheorem{thm}{Theorem}

\newtheorem{lem}{Lemma}

\def\beq{\begin{equation}}
\def\eeq{\end{equation}}
\def\beqn{\begin{eqnarray*}}
\def\eeqn{\end{eqnarray*}}
\def\bitem{\begin{itemize}}
\def\eitem{\end{itemize}}
\def\benum{\begin{enumerate}}
\def\eenum{\end{enumerate}}
\def\bmult{\begin{multline*}}
\def\emult{\end{multline*}}
\def\bcenter{\begin{center}}
\def\ecenter{\end{center}}

\newcommand{\thmref}[1]{Theorem~\ref{thm:#1}}

\newcommand{\lemref}[1]{Lemma~\ref{lem:#1}}
\newcommand{\secref}[1]{Section~\ref{sec:#1}}






\def\cN{\mathcal{N}}

\def\bA{\mathbf{A}}

\def\bI{\mathbf{I}}

\def\ba{\mathbf{a}}

\def\bu{\mathbf{u}}

\def\bx{\mathbf{x}}
\def\by{\mathbf{y}}
\def\bz{\mathbf{z}}



\def\bbE{\mathbb{E}}

\def\bbP{\mathbb{P}}
\def\bbQ{\mathbb{Q}}
\def\bbR{\mathbb{R}}

\newcommand{\E}{\operatorname{\mathbb{E}}}
\renewcommand{\P}{\operatorname{\mathbb{P}}}


\newcommand{\<}{\langle}
\renewcommand{\>}{\rangle}


\definecolor{purple}{rgb}{0.4,.1,.9}

\begin{document}

\title{Compressive Binary Search}

\author{\IEEEauthorblockN{Mark A.~Davenport}
\IEEEauthorblockA{Department of Statistics\\
Stanford University
}
\and
\IEEEauthorblockN{Ery Arias-Castro}
\IEEEauthorblockA{Department of Mathematics\\
University of California, San Diego
}}

\maketitle

\begin{abstract}
In this paper we consider the problem of locating a nonzero entry in a high-dimensional vector from possibly adaptive linear measurements.  We consider a recursive bisection method which we dub the {\em compressive binary search} and show that it improves on what any nonadaptive method can achieve.  We also establish a non-asymptotic lower bound that applies to all methods, regardless of their computational complexity. Combined, these results show that the compressive binary search is within a double logarithmic factor of the optimal performance.
\end{abstract}

\IEEEpeerreviewmaketitle

\section{Introduction}
\label{sec:intro}

How should one approach the problem of {\em finding a needle in a haystack}?  Specifically, suppose that a high-dimensional vector $\bx \in \mathbb{R}^n$ is known to have a single nonzero entry---how can we efficiently find the location of the nonzero? 
We will assume that we can learn about $\bx$ by taking $m$ noisy linear measurements of the form
\beq \label{measure}
y_i = \<\ba_i, \bx\> + z_i, \quad i = 1, \ldots, m,
\eeq
where the measurement vectors $\ba_1, \ldots, \ba_m$ have Euclidean norm at most 1 and $z_1, \ldots, z_m$ are i.i.d.\ according to $\mathcal{N}(0,1)$.  Our question reduces to the problem of choosing the vectors $\ba_1, \ldots \ba_m$ and constructing an algorithm to estimate the location of the nonzero from the measurements $y_1, \ldots, y_m$.

This is a special case of support recovery in compressive sensing (CS)~\cite{baraniuk2007compressive,intro-CS}, since \eqref{measure} is equivalent to the linear model
\beq \label{linear}
\by = \bA \bx + \bz,
\eeq
where $\by = (y_1, \dots, y_m)$, $\bA$ is the $m \times n$ matrix with row vectors $\ba_1, \dots, \ba_m$ and $\bz = (z_1, \dots, z_m)$.  (Note that the {\em rows} of $\bA$ are normalized, as opposed to the columns, which is another common convention in the CS literature.)  There are a variety of results on support recovery in the context of \eqref{linear} where the measurement matrix $\bA$ is fixed in advance (i.e., is nonadaptive) and satisfies certain desirable properties~\cite{verzelen2010minimax,iwen,5571873,5319750,CaiOMP,ZhaoYu,MR2543688}.  As an example, one can show that if $\bA$ is generated by drawing i.i.d.~$\pm 1/\sqrt{n}$ (symmetric) entries and the signal $\bx$ is 1-sparse with nonzero entry equal to $\mu > 0$, then the Lasso and Orthogonal Matching Pursuit (OMP) recover the support of $\bx$ with high probability provided that
\beq \label{nonadaptive}
\mu \ge C \sqrt{(n/m) \log n},
\eeq
with $C$ sufficiently large.  Moreover, any method based on such measurements requires $\mu$ to satisfy this lower bound for some constant $C > 0$~\cite{candes-davenport}.  This is essentially the whole story when the measurements are nonadaptive.

In contrast, suppose now that the system implementing \eqref{measure} can provide feedback in such a way as to allow for the measurements to be taken adaptively, meaning that $\ba_i$ may be chosen as a function of the observations up to time $i-1$, that is, $(y_1, \dots, y_{i-1})$.  (This implicitly assumes that $\ba_i$ is a deterministic function of this vector, but there is no loss of generality in this assumption.  See~\cite{adaptiveCS} for details.)  This instance of active or online learning has received comparatively far less attention to date.  However, in recent work~\cite{adaptiveCS} we have established lower bounds showing that {\em any} support recovery method under {\em any} adaptive sampling scheme (satisfying the conditions above) will be unable to recover the correct support unless the nonzero entry satisfies
\beq \label{adaptive}
\mu \ge C \sqrt{n/m},
\eeq
for some constant $C>0$.

Our contribution in this paper is twofold.  In \secref{binary}, we propose a {\em compressive binary search} algorithm which recursively tests whether the nonzero entry is on the left or right half of the current interval.  We show that the method reliably recovers the support of a 1-sparse vector when the nonzero entry satisfies
\beq \label{binary}
\mu \ge C \sqrt{(n/m) \log \log_2 n},
\eeq
with a constant $C>2$.  We then verify this analysis via numerical simulations.  Note that by using an adaptive measurement scheme we are able to improve upon the requirement in~\eqref{nonadaptive} by reducing the $\log n$ to $\log \log_2 n$, but our scheme does not eliminate the logarithmic factor entirely as in~\eqref{adaptive}. A corollary of this result is that in contrast to the results of~\cite{adaptiveCS}, which argued that in general adaptive strategies do not improve over nonadaptive strategies in terms of our ability to accurately recover $\bx$, we see that when $\mu$ satisfies~\eqref{binary}, adaptive strategies can significantly outperform nonadaptive ones by first identifying the location of the nonzero and then reserving a set of measurements to more accurately estimate the value of the nonzero.

In contrast to this upper bound, in \secref{lower}, we provide a simple proof that $\mu \ge C \sqrt{n/m}$ is necessary for any method to work.  This novel proof is in some sense tailored to this binary method as it too is based on testing whether the nonzero component is in the left or right half of $\bx$. In \secref{discussion}, we discuss related work in more detail and directions for future work.

\section{Compressive Binary Search}
\label{sec:binary}

\subsection{The algorithm}
The algorithm is designed assuming that the target vector $\bx$ has exactly one nonzero entry equal to $\mu > 0$; both the location and magnitude are unknown.  The methodology described here can be easily adapted to the case where the sign of the nonzero entry is unknown.  For simplicity, we assume that $n$ is dyadic, and let $s_0 = \log_2 n$, where $\log_2$ denotes the logarithm in base 2.

With a budget of $m \ge 2 \log_2 n$ measurements of the form \eqref{measure}, the binary search method proceeds as follows.  We divide our $m$ measurements into a total of $s_0$ stages, allocating
$m_s$ measurements to stage $s$, where
\beq \label{msdef}
m_s := \widetilde{m}_s+1, \qquad
\widetilde{m}_s := \left\lfloor  (m - s_0) 2^{-s}  \right\rfloor,
\eeq
where $\lfloor a \rfloor$ denotes the largest integer not greater than $a$. Note that we do not exceed our total measurement budget since
\[
\sum_{s=1}^{s_0} m_s = s_0 + \sum_{s=1}^{s_0} \widetilde{m}_s \le s_0 + (m - s_0) \sum_{s=1}^{s_0} 2^{-s} \le m.
\]
We also have $m_s \ge 1$ for all $s$, which is necessary for our algorithm to be able to run to completion.
Starting with $J_0^{(1)} := \{1, \dots, n\}$, at stage $s = 1, \dots, s_0$, we have a dyadic interval $J_0^{(s)}$ and consider its left and right halves denoted $J_1^{(s)}$ and $J_2^{(s)}$.  For example, $J_1^{(1)} := \{1, \ldots, \frac{n}{2} \}$ and $J_2^{(1)} := \{\frac{n}{2}+1, \ldots, n\}$.  Let $\bu^{(s)}$ denote the vector with entries indexed by $J_1^{(s)}$ equal to $2^{-(s_0-s+1)/2}$ and with entries indexed by $J_2^{(s)}$ equal to $-2^{-(s_0-s+1)/2}$.  Note that $\|\bu^{(s)}\| = 1$, since $|J_1^{(s)}| = |J_2^{(s)}| = 2^{s_0-s}$.  We measure $m_s$ times with $\bu^{(s)}$, meaning that we observe
\[
y_{i}^{(s)} = \< \bu^{(s)}, \bx \> + z_{i}^{(s)}, \quad i = 1, \dots, m_s.
\]
Based on these measurements, we decide between going left or right, meaning we test whether the nonzero entry is in $J_1^{(s)}$ or $J_2^{(s)}$.  We do so by simply computing
\[
w^{(s)} = \sum_{i =1}^{m_s} y_{i}^{(s)}.
\]
Specifically, we set $J_0^{(s+1)} = J_1^{(s)}$ if $w^{(s)} > 0$, and $J_0^{(s+1)} = J_2^{(s)}$ otherwise.

\subsection{Performance analysis}
\label{sec:analysis}

The binary search improves on methods based on nonadaptive measurements by by weakening the requirement \eqref{nonadaptive} to \eqref{binary}.

\begin{thm} \label{thm:binary}
In our setting, with a single nonzero entry equal to $\mu > 0$ and a measurement budget of $m \ge 2 \log_2 n$,
the probability that binary search fails to locate the nonzero entry (denoted $\bbP_e$) satisfies
\beq \label{Pe}
\bbP_e \le \frac{\log_2 n}2 \exp \left( - \frac{\mu^2 m}{8 n} \right).
\eeq
\end{thm}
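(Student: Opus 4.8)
The plan is to condition on the algorithm having decided correctly in every earlier stage, bound the resulting per-stage conditional error probability with a Gaussian tail estimate, and conclude with a union bound over the $s_0 = \log_2 n$ stages.

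First I would fix a stage $s$ and condition on the event $A_{s-1}$ that the decisions at stages $1, \dots, s-1$ were all correct. On $A_{s-1}$ the current dyadic interval $J_0^{(s)}$ --- itself random, since it depends on the earlier measurements --- contains the nonzero entry, so exactly one of $\< \bu^{(s)}, \bx \> = \mu\, 2^{-(s_0 - s + 1)/2}$ (nonzero in $J_1^{(s)}$) or $\< \bu^{(s)}, \bx \> = -\mu\, 2^{-(s_0 - s + 1)/2}$ (nonzero in $J_2^{(s)}$) holds, with the sign indicating the correct branch. Since $w^{(s)} = m_s \< \bu^{(s)}, \bx \> + \sum_{i=1}^{m_s} z_i^{(s)}$ and the stage-$s$ noise is i.i.d.\ $\mathcal{N}(0,1)$ and independent of everything observed before stage $s$, conditionally $w^{(s)}$ is Gaussian with mean $\pm m_s \mu\, 2^{-(s_0 - s + 1)/2}$ and variance $m_s$. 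Hence the conditional probability that the sign test on $w^{(s)}$ selects the wrong half equals $\Phi\bigl(-\sqrt{m_s}\,\mu\, 2^{-(s_0 - s + 1)/2}\bigr)$, where $\Phi$ is the standard normal CDF; crucially this value does not depend on which interval $J_0^{(s)}$ happens to be, so it is also the probability conditional on $A_{s-1}$.

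Next I would invoke the tail bound $\Phi(-t) \le \tfrac12 e^{-t^2/2}$ for $t \ge 0$, giving a conditional stage-$s$ error probability at most $\tfrac12 \exp\bigl(-\tfrac12 m_s \mu^2 2^{-(s_0 - s + 1)}\bigr)$. The routine step is to simplify the exponent: using $2^{s_0} = n$ one has $2^{-(s_0 - s + 1)} = 2^{s}/(2n)$, so the exponent equals $-\mu^2 m_s 2^{s}/(4n)$. From \eqref{msdef}, $m_s = \lfloor (m - s_0) 2^{-s} \rfloor + 1 \ge (m - s_0) 2^{-s}$, hence $m_s 2^{s} \ge m - s_0$; and since $m \ge 2\log_2 n = 2 s_0$ we get $m - s_0 \ge m/2$. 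Therefore $m_s 2^{s} \ge m/2$ and the conditional stage-$s$ error probability is at most $\tfrac12 \exp(-\mu^2 m/(8n))$, uniformly in $s$.

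Finally, writing the failure event as the union over $s$ of ``stages $1, \dots, s-1$ correct but stage $s$ wrong,'' a union bound over the $s_0 = \log_2 n$ stages yields $\bbP_e \le s_0 \cdot \tfrac12 \exp(-\mu^2 m/(8n))$, which is \eqref{Pe}. I do not anticipate a genuine obstacle; the one point needing care is the conditioning argument in the first step --- namely that, because the stage-$s$ noise is fresh, the conditional error probability is the same whatever (random) interval $J_0^{(s)}$ was reached, which is precisely what legitimizes the simple union bound despite the adaptive choice of measurement vectors.
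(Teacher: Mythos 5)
Your proof is correct and takes essentially the same route as the paper's: a per-stage Gaussian tail bound $\bar{\Phi}(t) \le \tfrac12 e^{-t^2/2}$ applied to the sign test on $w^{(s)}$, the observation that $m_s 2^s \ge m - s_0 \ge m/2$, and a union bound over the $s_0 = \log_2 n$ stages. Your explicit conditioning on the earlier decisions being correct is just a more careful rendering of the paper's union bound (the paper instead invokes equivariance to place the nonzero in position 1 and sums $\bbP(w^{(s)}<0)$ along the correct path), and your intermediate exponent $-\mu^2 m_s 2^s/(4n)$ is in fact the right one --- the paper's display \eqref{probbound2} drops the factor of $4$, though its final bound is consistent with yours.
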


\begin{proof}
Since the binary search algorithm is equivariant with respect to the ordering of the entries, we can begin by assuming without loss of generality that $\bx = (\mu, 0, \ldots, 0)^T$, i.e., the nonzero is located in the first entry of $\bx$. Thus, we can use a simple union bound to argue that
\beq \label{probbound1}
\bbP_e \le \sum_{s=1}^{s_0} \bbP\big(w^{(s)} < 0 \big),
\eeq
where $J_1^{(s)} = \{1, \ldots, 2^{-s} n\}$ and $J_2^{(s)} = \{2^{-s} n+1, \ldots 2^{1-s} n\}$.  Under our assumptions, we have that
$$
w^{(s)} \sim \cN \left( 2^{(s-1)/2} \frac{m_s \mu}{\sqrt{n}}, m_s \right).
$$
Thus we can bound
\begin{align}
\bbP\big(w^{(s)} < 0 \big) & = \bar{\Phi}\left( \mu \cdot \sqrt{ \frac{m_s 2^s}{2 n} } \right) \notag \\
 & \le \frac12 \exp \left( -\frac{ m_s \mu^2 2^s}{ n} \right), \label{probbound2}
\end{align}
since for all $t > 0$ we have
$$
\bar{\Phi}(t) := \bbP(\cN(0,1) > t) \le \frac12 \exp( - t^2/2).
$$

We next note that by construction,
$$
\widetilde{m}_s+1 \ge (m - s_0) 2^{-s}.
$$
Since $m \ge 2 s_0$, we have that $m - s_0 \ge m/2$, and hence we obtain
$$
m_s 2^s \ge (\widetilde{m}_s+1) 2^s \ge m - s_0 \ge m/2.
$$
Plugging $m_s 2^s \ge m/2$ into~\eqref{probbound2}, we obtain
$$
\bbP\big(w^{(s)} < 0 \big) \le \frac12 \exp \left( -\frac{ \mu^2 m}{8 n} \right).
$$
Plugging this into~\eqref{probbound1} we arrive at
$$
\bbP_e \le \frac{s_0}2 \exp \left(-\frac{ \mu^2 m}{8 n} \right),
$$
as desired.
\end{proof}

Note that we need \eqref{binary} with $C > 2\sqrt{2} $ for the upper bound on $\P_e$ in \eqref{Pe} to actually tend to zero as $n$ increases.  However, by taking additional measurements beyond the $2 \log_2 n$ required by this theorem, we could loosen this requirement to be able to set $C$ arbitrarily close to $2$.

\subsection{Numerical experiments}
\label{sec:numerics}

To validate our theory, we perform some simple numerical experiments. Specifically, we compare the performance of the compressive binary search procedure to that of OMP (with $\bA$ constructed with random $\pm 1/\sqrt{n}$ entries). Note that in the 1-sparse case, OMP simply reduces to identifying the column of $\bA$ most highly correlated with the measurements $\by$.  The performance of these two algorithms is shown in Figure~\ref{fig:1}, which shows the empirical probability of error as a function of $\mu$ computed by averaging over $100,000$ trials.  For these experiments, we set $n=4096$ and $m=256$.  Note that for these values of $n$ and $m$, we have that $\sqrt{n/m} = 4$ and $\sqrt{(n/m) \log \log_2 n} \approx 6.3$.  Thus, ignoring the constant terms in~\eqref{adaptive} and~\eqref{binary}, we see that the performance of the compressive binary search is largely consistent with our theory---namely, it cannot reliably identify the location of the nonzero when $\mu \le 4$ but can for $\mu \gtrsim 6.3$.  Moreover, recall that as noted in~\eqref{nonadaptive}, the nonadaptive OMP algorithm requires that $\mu$ exceed $\sqrt{(n/m) \log n} \approx 11.5$ to succeed.  Again ignoring constants, in our case this corresponds to requiring $\mu$ to be roughly $1.8$ times larger than is required for the compressive binary search procedure, and this is precisely the behavior we observe in Figure~\ref{fig:1}.

\begin{figure}[t]
   \centering
   \includegraphics[width=3.25in]{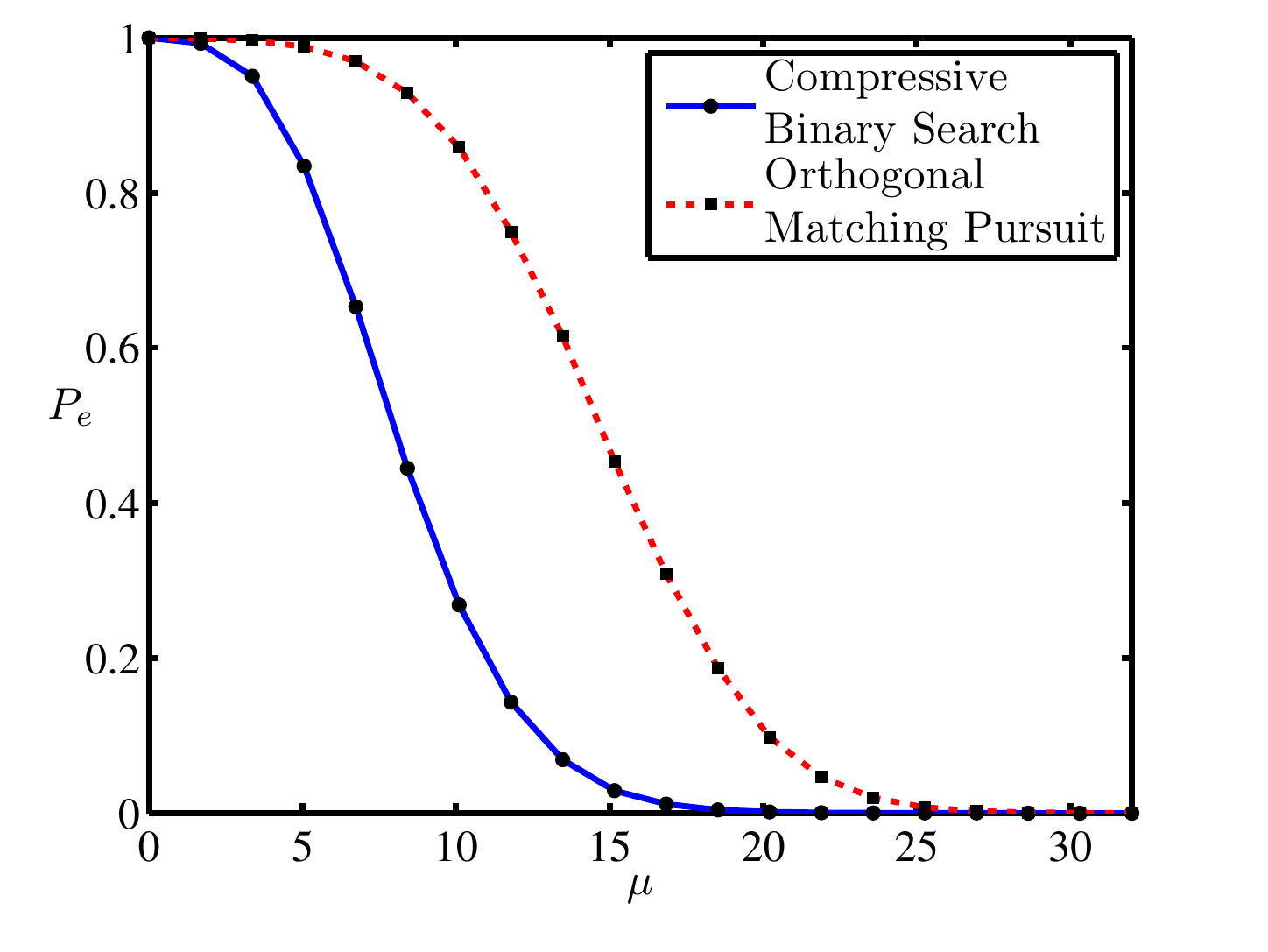}
   \caption{Comparison between compressive binary search and OMP as a function of $\mu$ for $n=4096$ and $m=256$.  Observe that compressive binary search can successfully identify the location of the nonzero for weaker values of $\mu$ than OMP, but still requires $\mu > \sqrt{n/m} = 4$.}
   \label{fig:1}
\end{figure}

\section{Lower Bound: Left or Right?}
\label{sec:lower}

We now establish an explicit, non-asymptotic lower bound for adaptive support recovery, valid for any recovery method based on adaptive measurements satisfying the conditions required here.  Though such bounds were recently derived in~\cite{adaptiveCS}, we provide a slightly simpler proof here for the case of 1-sparse signals that closely aligns with the core idea of the compressive binary search.

Let $y_{[i]} = (y_1, \dots, y_{i})$ denote the information available after taking $i$ measurements.  Let $\P_\bx$ denote the distribution of these measurements when the target vector is $\bx$.  Without loss of generality, we assume that $\ba_i$ is a deterministic function of $y_{[i-1]}$.  In that case, using the fact that $y_i$ is independent of $y_{[i-1]}$ when $\ba_i$ is given, we have
\beq \label{Px}
\P_\bx(y_{[m]}) = \prod_{i=1}^m \P_\bx(y_i  | \ba_i).
\eeq
For a subset $K \subset \{1, \dots, n\}$, let $K^c := \{1, \dots, n\} \setminus K$ and let $\bx_{K}$ be the part of $\bx$ indexed by $K$.

Let $\|\bbP - \bbQ\|_{\rm TV}$ denote the total variation metric between distributions $\P$ and $\bbQ$, and $K(\P, \bbQ)$ their Kullback-Leibler divergence~\cite{MR2319879}, related by Pinsker's inequality
\beq \label{pinsker}
\|\bbP - \bbQ\|_{\rm TV}^2 \leq \frac12 K(\bbP, \bbQ).  
\eeq

\begin{lem} \label{lem:testing}
Suppose that $n$ is even and let $J_1 = \{1, \dots, n/2\}$ and $J_2 = \{n/2+1, \dots, n\}$.  For $r=1,2$, let $\pi_r$ denote the uniform prior on the vectors $\bx \in \bbR^n$ having a single nonzero entry equal to $\mu > 0$, located in $J_r$.  Let $\P_r$ denote the distribution of $y_{[m]}$ when $\bx \sim \pi_r$.  Then
$$\|\P_2 - \P_1\|_{\rm TV}^2 \leq \frac{\mu^2 m}n.$$
\end{lem}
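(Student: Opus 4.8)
My plan is to bound the total variation distance between the mixtures $\P_1$ and $\P_2$ by controlling the KL divergence via Pinsker's inequality~\eqref{pinsker}, and then exploiting the product structure~\eqref{Px} of the adaptive likelihood to reduce everything to a single-measurement computation. The essential point is that for a fixed measurement vector $\ba_i$ (which we may take deterministic given $y_{[i-1]}$), the conditional law $\P_\bx(y_i \mid \ba_i)$ is $\cN(\langle \ba_i, \bx\rangle, 1)$, so two such Gaussians differ only through their means, and the per-step KL contribution is $\tfrac12(\langle \ba_i, \bx\rangle - \langle \ba_i, \bx'\rangle)^2$.

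First I would dispense with the mixture by a convexity/data-processing step. Since $\|\cdot\|_{\rm TV}$ and $K(\cdot,\cdot)$ are jointly convex, and each $\pi_r$ is a finite uniform average over the $n/2$ point masses $\bx$ supported in $J_r$, it suffices to couple the two priors entrywise: pair the vector $\bx$ with nonzero in position $j \in J_1$ with the vector $\bx'$ with nonzero in position $j + n/2 \in J_2$, and bound $\|\P_{\bx} - \P_{\bx'}\|_{\rm TV}^2 \le \tfrac12 K(\P_{\bx},\P_{\bx'})$ uniformly over such pairs. The sought bound then follows by averaging. (One should take a moment to confirm that Jensen goes the right way here — convexity of $\|\P-\cdot\|_{\rm TV}^2$ in the second argument, then in the first — but this is standard.)

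Next I would expand the KL divergence using the chain rule together with~\eqref{Px}. Writing $\P_\bx(y_{[m]}) = \prod_i \P_\bx(y_i\mid \ba_i)$, the KL divergence telescopes into a sum of conditional KL terms,
\[
K(\P_{\bx},\P_{\bx'}) = \sum_{i=1}^m \E_{\P_\bx}\!\left[ K\big(\P_\bx(\cdot\mid\ba_i),\,\P_{\bx'}(\cdot\mid\ba_i)\big) \right],
\]
where the expectation accounts for the fact that $\ba_i$ is a random function of $y_{[i-1]}$. Each inner term is the KL divergence between $\cN(\langle\ba_i,\bx\rangle,1)$ and $\cN(\langle\ba_i,\bx'\rangle,1)$, which equals $\tfrac12\langle\ba_i,\bx-\bx'\rangle^2$. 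Now $\bx - \bx'$ is $\mu$ in coordinate $j$ and $-\mu$ in coordinate $j+n/2$ (all others zero), so $\langle\ba_i,\bx-\bx'\rangle = \mu(a_{i,j} - a_{i,j+n/2})$, giving $\tfrac12\langle\ba_i,\bx-\bx'\rangle^2 \le \mu^2(a_{i,j}^2 + a_{i,j+n/2}^2)$.

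The main obstacle — and the place where the assumption $\|\ba_i\|\le 1$ finally gets used — is controlling $\sum_i \E_{\P_\bx}[a_{i,j}^2 + a_{i,j+n/2}^2]$ for the \emph{worst} coordinate $j$, since the adversary picking the measurements knows nothing useful about $j$ only in a prior-averaged sense. The clean way around this is to average over $j$ as well: averaging the pairwise KL bound over the $n/2$ choices of $j$ (uniform prior), $\frac{2}{n}\sum_{j\in J_1}(a_{i,j}^2 + a_{i,j+n/2}^2) = \frac{2}{n}\|\ba_i\|^2 \le \frac{2}{n}$ for every realization of $\ba_i$, hence also in expectation. Summing over $i=1,\dots,m$ gives an averaged KL bound of $m\mu^2 \cdot \frac{2}{n} = \frac{2\mu^2 m}{n}$, and Pinsker yields $\|\P_2-\P_1\|_{\rm TV}^2 \le \tfrac12 \cdot \frac{2\mu^2 m}{n} = \frac{\mu^2 m}{n}$, as claimed. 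So the real trick is to perform the coordinate-averaging \emph{before} applying Pinsker, so that the $\|\ba_i\|\le 1$ constraint can be invoked pathwise and the supremum over the adversary's measurement choices disappears.
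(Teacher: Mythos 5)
Your reduction to pairwise comparisons, the KL chain rule for adaptive designs, and the per-step Gaussian computation $\frac12\langle\ba_i,\bx-\bx'\rangle^2$ are all fine, and the constants would come out right. The gap is in the final averaging step. After the chain rule you are left with
\[
\frac{2}{n}\sum_{j\in J_1} K\big(\P_{(j)},\P_{(j+n/2)}\big) \;\le\; \frac{2\mu^2}{n}\sum_{i=1}^m\sum_{j\in J_1}\E_{\P_{(j)}}\big[a_{i,j}^2+a_{i,j+n/2}^2\big],
\]
and you want to bound the inner sum over $j$ by $1$ using the pointwise identity $\sum_{j\in J_1}(a_{i,j}^2+a_{i,j+n/2}^2)=\|\ba_i\|^2$. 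But each term in that sum carries its own expectation $\E_{\P_{(j)}}$, and for $i\ge 2$ the law of $\ba_i$ (a function of $y_{[i-1]}$) genuinely depends on $j$. A sum of expectations taken under $j$-dependent measures is not the expectation of the sum under any single measure, so ``true for every realization of $\ba_i$, hence also in expectation'' does not apply. Worse, this is exactly the quantity a good adaptive scheme drives up: a procedure that first learns the location and then concentrates $\ba_i$ on it can have $\E_{\P_{(j)}}[a_{i,j}^2]$ close to $1$ for every $j$ simultaneously, making the inner sum of order $n/2$ rather than $1$. (Your step is valid for $i=1$, where $\ba_1$ is deterministic and all the expectations coincide.)

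This is precisely the difficulty the paper's proof is structured to avoid. It first pays a factor of $2$ via $\|\P_2-\P_1\|_{\rm TV}^2\le 2\|\P_0-\P_1\|_{\rm TV}^2+2\|\P_0-\P_2\|_{\rm TV}^2$, where $\P_0$ is the law of $y_{[m]}$ under $\bx=0$, and then bounds $K(\P_0,\P_{(j)})=\frac{\mu^2}{2}\sum_{i}\E_0\, a_{i,j}^2$. Because this KL is an expectation under the null $\P_0$ --- the \emph{same} measure for every $j$ --- the sum over $j$ can be moved inside a single expectation, and $\E_0\sum_j a_{i,j}^2=\E_0\|\ba_i\|^2\le 1$ closes the argument with the same constant you were aiming for. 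So the fix is not to pair $j$ with $j+n/2$ directly, but to compare everything to the null (or to find some other device that places all $j$ under one common measure); as written, your outline does not go through.
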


\begin{proof}
Let $\P_0$ denote the distribution of $y_{[m]}$ when $\bx = 0$, which is multivariate normal with zero mean and covariance $\bI$.  Using Pinsker's inequality \eqref{pinsker}, we have
\begin{align*}
\|\P_2 - \P_1\|^2_{\rm TV}
&\leq  2 \|\P_0 - \P_1\|^2_{\rm TV} + 2 \|\P_0 - \P_2\|^2_{\rm TV} \\
&\le  K(\P_0, \P_1)+K(\P_0, \P_2).
\end{align*}
Let $\bbP_{(j)}$ denote the distribution of $y_{[m]}$ when the nonzero entry (equal to $\mu$) is at $j \in \{1, \dots, n\}$.
By the law of total probability,
$$
\P_1 = \frac2n \sum_{j \in J_1} \P_{(j)},
$$
and obviously
$$
\P_0 = \frac2n \sum_{j \in J_1} \P_{0},
$$
which allows us to use the convexity of the KL divergence~\cite{MR2239987}, to obtain
$$
K(\P_0, \P_1) \leq \frac2n \sum_{j \in J_1} K(\P_{0}, \P_{(j)}).
$$
Under $\P_{(j)}$, $y_i = \mu a_{i,j} + z_i$, while under $\P_{0}$, $y_i = z_i$, so that
\begin{align*}
K(\P_{0}, \P_{(j)})
&= - \E_{0} \log \frac{\P_{(j)}}{\P_{0}} \\
&= \sum_{i=1}^m \E_{0} \left(\frac12 (y_i  - \mu a_{i,j})^2 - \frac12 y_i^2\right)  \\
&= \sum_{i=1}^m \E_{0} \left(- y_i \mu a_{i,j} + (\mu a_{i,j})^2/2\right) \\
&= \frac{\mu^2}2 \sum_{i=1}^m \E_{0} a_{i,j}^2 .
\end{align*}
The first line is by definition; the second and third are consequences of \eqref{Px} and the definition of the normal likelihood; the fourth line is because, under $\bbP_0$, $y_i$ is independent of $a_{i,j}$ and has zero mean. Hence,
$$
K(\P_0, \P_1) \leq \frac{\mu^2}{n} \sum_{i=1}^m \E_{0} \sum_{j \in J_1} a_{i,j}^2,
$$
and similarly,
\[
K(\P_0, \P_2) \leq \frac{\mu^2}{n} \sum_{i=1}^m \E_{0} \sum_{j \in J_2} a_{i,j}^2,
\]
so that
\[
K(\P_0, \P_1)+K(\P_0, \P_2) \le \frac{\mu^2}{n} \sum_{i=1}^m \E_{0} \sum_{j=1}^n a_{i,j}^2 \le \frac{\mu^2 m }{n},
\]
since $\|\ba_i\| \le 1$ for all $i$.
\end{proof}

\lemref{testing} implies a lower bound on the risk of the problem of testing whether a vector $\bx \in \bbR^n$ with a single nonzero entry equal to $\mu$ is supported on the first half or second half of the index set $\{1, \dots, n\}$.  Proving this result by directly looking at the likelihood ratio, which would be the standard approach, seems quite delicate as we are testing a mixture (supported on the first half) versus a mixture (supported on the second half).

\begin{thm} \label{thm:testing}
In the setting of \lemref{testing}, consider testing $H_1$ versus $H_2$, where $H_r$ is the hypothesis by which $\bx$ is supported in $J_r$.  Then under the uniform prior, for any test $T$,
\[
\P(T~{\rm fails}) \ge \frac12 \left(1 - \mu \sqrt{m/n} \right).
\]
\end{thm}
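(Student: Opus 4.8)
The plan is to reduce the statement to the classical two-point testing lower bound in terms of total variation, and then feed in the estimate from \lemref{testing}. Recall that under the uniform prior the two hypotheses $H_1,H_2$ are equally likely and, conditionally on $H_r$, the data $y_{[m]}$ are distributed as $\P_r$ (this is exactly why \lemref{testing} was phrased using the mixtures $\P_1,\P_2$). So the first step is to fix an arbitrary (possibly randomized) test $T$ taking values in $\{1,2\}$ based on $y_{[m]}$, let $A=\{y_{[m]}:T=1\}$ be the implied acceptance region for $H_1$ (for a randomized test simply replace $\mathbf 1_A$ by the test function $\phi(y_{[m]})=\P(T=1\mid y_{[m]})\in[0,1]$; nothing below changes), and write the average error as
\[
\P(T~{\rm fails}) = \tfrac12\,\P_1(T=2) + \tfrac12\,\P_2(T=1) = \tfrac12\bigl(1-\P_1(A)+\P_2(A)\bigr).
\]

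Next I would bound $\P_1(A)-\P_2(A)\le \sup_B|\P_1(B)-\P_2(B)| = \|\P_2-\P_1\|_{\rm TV}$, which is just the definition of the total variation metric (or, equivalently, integrating the positive part of the difference of densities). This gives the generic bound
\[
\P(T~{\rm fails}) \ge \tfrac12\bigl(1-\|\P_2-\P_1\|_{\rm TV}\bigr),
\]
valid for every test. Finally, I would substitute the bound $\|\P_2-\P_1\|_{\rm TV}^2 \le \mu^2 m/n$ from \lemref{testing}: taking square roots yields $\|\P_2-\P_1\|_{\rm TV}\le \mu\sqrt{m/n}$, and plugging this into the previous display gives $\P(T~{\rm fails}) \ge \frac12(1-\mu\sqrt{m/n})$, as claimed.

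I do not expect a real obstacle here, since all the analytic content is already contained in \lemref{testing}; what remains is the textbook Le Cam two-point argument. The only points deserving a line of care are (i) making explicit that the error probability is computed against the mixture prior $\frac12\pi_1+\frac12\pi_2$, so that it decomposes into the two type-I/type-II terms above, and (ii) noting that randomized tests are handled with no loss by passing from indicator sets to $[0,1]$-valued test functions, which leaves every inequality intact.
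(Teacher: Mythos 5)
Your proposal is correct and follows essentially the same route as the paper: the paper also invokes the standard bound $\frac12(1-\|\P_2-\P_1\|_{\rm TV})$ on the minimum testing risk under the uniform prior (citing Lemma 1 of the adaptive-CS reference rather than rederiving it) and then plugs in \lemref{testing}. Your write-up merely makes the Le Cam two-point step explicit, including the handling of randomized tests.
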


Note that the lower bound is also valid in the minimax sense.  In fact, the uniform prior is least favorable by invariance consideration~\cite[Sec.~8.4]{TSH}.

\begin{proof}
Under the uniform prior, we are effectively testing $\P_1$ versus $\P_2$.  The likelihood ratio test, which rejects when $L > 1$, with $L := \P_2/\P_1$, has minimum risk, bounded by
$$
\frac12 \left(1 - \|\bbP_2 - \bbP_1\|_{\rm TV}\right).
$$
(See Lemma 1 of~\cite{adaptiveCS}.) We then apply \lemref{testing} to bound the total variance distance on the RHS.
\end{proof}

\section{Discussion}
\label{sec:discussion}

Our main results can be cast as follows: \thmref{binary} implies that, with probability at least $1/4$, the binary search method locates the nonzero entry (for $n \ge 4$) if
\[
\mu \ge 4 \sqrt{\log \log_2 n} \sqrt{\frac{n}m} ,
\]
while \thmref{testing} shows that any method for locating the nonzero entry fails with probability at least $1/4$ when
\[
\mu \le \frac12 \sqrt{\frac{n}m}.
\]
Clearly, the bounds do not match.  Numerically, for $n \le 10^6$, $\log \log_2 n \le 3$, in which case the discrepancy is a multiplicative factor of $8\sqrt{3} < 14$.

We will return to the issue of whether this gap can be closed below, but first we wish to discuss an additional implication of \thmref{testing}.  Specifically, one can show that \thmref{testing} implies that for any estimator $\widehat{S}$ of the support of $\bx$, $\bbE |\widehat{S} \Delta S| \ge (1-\mu \sqrt{m/n})$. Following the same argument as in Theorem 2 of~\cite{adaptiveCS}, this implies that under the measurement model in~\eqref{linear} we have
$$
\inf_{\widehat{\bx}} \sup_{\bx : \mathrm{1-sparse} } \frac1n \bbE \| \widehat{\bx} - \bx \|^2 \ge C \frac{1}{m},
$$
where $C = 1/27$.  In contrast, \thmref{binary} implies that for sufficiently large $\mu$, there exist estimators that do far better than this bound (by a factor of $n$).

While the problems of estimating or detecting the support of a 1-sparse vector might seem to have only limited applications, in fact one can extend any algorithm that identifies the support of a 1-sparse vector to one that works for vectors with $k\ge 2$ nonzero entries.  This can be done by first exploiting a simple hashing scheme which will (with high probability) isolate each nonzero, and then applying the method for 1-sparse vectors to each hash separately.  For an overview of this approach in a similar context, see~\cite{iwen}.

We also note that~\cite{iwen} independently proposes a method very similar to the compressive binary search approach we describe.  Though~\cite{iwen} considers a different setting with continuous signals (instead of vectors as we do), the method proposed is essentially the same, except that the measurement budget is partitioned differently.  In particular, it is not obvious to us that the strategy in~\cite{iwen} will always succeed, since it does not account for rounding effects or enforce that a base number of measurements are reserved for each scale (stage) and so (to the best of our understanding) the method might exhaust its measurement budget before the algorithm terminates.  Another key difference is that by considering the simpler setting of a vector in $\mathbb{R}^n$, we can significantly simplify the analysis.  That being said, the conclusions of~\cite{iwen} are broadly similar to our own.

Finally, we also note that there a few other adaptive algorithms that have been proposed in this setting.  For example,~\cite{haupt-adaptive} proposes an algorithm similar to the compressive binary search procedure but using a different procedure for allocating measurements to each stage.  As another example, the Compressive Distilled Sensing (CDS) algorithm proposed in~\cite{haupt-compressive} considers a CS sampling algorithm which performs sequential subset selection via the random projections typical of CS.  In a different direction, \cite{4518814,4524050} suggest Bayesian approaches where the measurement vectors are sequentially chosen so as to maximize the conditional differential entropy of $y_i$ given $(y_1, \dots, y_{i-1})$.  While it remains a challenge to obtain performance bounds for the Bayesian methods suggested in \cite{4518814,4524050}, CDS is analyzed in detail in~\cite{haupt-compressive} for the task of estimating a $k$-sparse vector $\bx$.  Following the proof with a view on support recovery, one can establish that CDS is reliable in our context when
\[
\mu \ge C_n \sqrt{n/m},
\]
with $C_n \to \infty$ arbitrarily slowly, coming extremely close to the lower bound  of~\eqref{adaptive}.  However, the algorithm seems to require that $m \ge n^\alpha$ for a constant $\alpha > 0$ fixed, while binary search only requires $m \ge 2 \log_2 n$.

An important question would seem to be whether there exist methods which allow for both small $m$ {\em and} $\mu$ approaching the bound in~\eqref{adaptive}.  After the submission of this paper, Malloy and Nowak proposed a slight modification of the compressive binary search approach (involving a different allocation of the measurements to each stage) which answers this question~\cite{malloynowak}.  Specifically,~\cite{malloynowak} removes the $\log \log_2 n$ term at the cost of a slightly worse constant.  Thus, the gap between the lower bound in~\eqref{adaptive} and the upper bound in~\cite{malloynowak} differs only by a constant factor.  It would be interesting to know whether either of these bounds can be tightened.

\section*{Acknowledgements}

Thanks to E.\ Cand\`es for many insightful discussions.
M.\ D.\ is supported by NSF grant DMS-1004718.
E.\ A-C.\ is partially by ONR grant
N00014-09-1-0258.

\bibliographystyle{IEEEtran}
\bibliography{adaptiveCS}

\end{document}